\newtheorem{theorem}{Theorem}
\theoremstyle{definition}
\newtheorem{defn}{Definition}
\begin{document}

\title{\LARGE \bf The Strategic LQG System: A Dynamic Stochastic VCG Framework for Optimal Coordination}

\author{Ke Ma and P. R. Kumar
\thanks{This work is supported in part by NSF Contract ECCS-1760554, 
NSF Science \& Technology Center Grant CCF-0939370, the Power Systems Engineering Research Center (PSERC), NSF Contract IIS-1636772, and NSF Contract 
ECCS-1760554.}
\thanks{Ke Ma and P. R. Kumar are with the Department of Electrical \& Computer Engineering, Texas A\&M University, College Station, TX 77843, USA
{\tt\small ke.ma@tamu.edu, prk.tamu@gmail.com}}
}

\maketitle
\thispagestyle{empty}
\pagestyle{empty}

\begin{abstract}
The classic Vickrey-Clarke-Groves (VCG) mechanism ensures incentive compatibility, i.e., that truth-telling of all agents is a dominant strategy, for a static one-shot game. However, in a dynamic environment that unfolds over time, the agents' intertemporal payoffs depend on the expected future controls and payments, and a direct extension of the VCG mechanism is not sufficient to guarantee incentive compatibility. In fact, it does not appear to be feasible to construct mechanisms that ensure the dominance of dynamic truth-telling for agents comprised of general stochastic dynamic systems. The contribution of this paper is to show that such a dynamic stochastic extension does exist for the special case of Linear-Quadratic-Gaussian (LQG) agents with a careful construction of a sequence of layered payments over time.

We propose a layered version of a modified VCG mechanism for payments that decouples the intertemporal effect of current bids on future payoffs, and prove that truth-telling of dynamic states forms a dominant strategy if system parameters are known and agents are rational.

An important example of a problem needing such optimal dynamic coordination of stochastic agents arises in power systems where an Independent System Operator (ISO) has to ensure balance of generation and consumption at all time instants, while ensuring social optimality (maximization of the sum of the utilities of all agents). Addressing strategic behavior is critical as the price-taking assumption on market participants may not hold in an electricity market. Agents, can lie or otherwise game the bidding system. The challenge is to determine a bidding scheme between all agents and the ISO that maximizes social welfare, while taking into account the stochastic dynamic models of agents, since renewable energy resources such as solar/wind are stochastic and dynamic in nature, as are consumptions by loads which are influenced by factors such as local temperatures and thermal inertias of facilities.
\end{abstract}
%\begin{IEEEkeywords}
%Vickery-Clarke-Groves (VCG) mechanism, stochastic systems, incentive compatibility, Independent System Operator (ISO).
%\end{IEEEkeywords}

\section{introduction}
Mechanism design is the sub-field of game theory that considers how to implement socially optimal solutions to problems involving multiple self-interested agents, each with a private utility function. A typical approach in mechanism design is to provide financial incentives such as payments to promote truth-telling of utility function parameters from agents. Consider for example the Independent System Operator (ISO) problem of electric power systems in which the ISO aims to maximize social welfare and maintain balance of generation and consumption while each generator/load has a private utility function. The classic Vickery-Clarke-Groves (VCG) mechanism \cite{JOFI:JOFI2789} has played a central role in classic mechanism design since it ensures incentive compatibility, i.e., truth-telling of utility functions of all agents forms a dominant strategy, and social efficiency, i.e., the sum of utilities of all agents is maximized. Indeed, the outcome generated by the VCG mechanism is stronger than a Nash equilibrium in the sense that it is \emph{strategy-proof}, meaning that truth-telling of utility functions is optimal irrespective of what others are bidding. In fact, Green, Laffont and Holmstrom \cite{Green1979} show that VCG mechanisms are the \emph{only} mechanisms that are both efficient and strategy-proof if payoffs are quasi-linear.

While the VCG mechanism is applicable to a static one-shot game, it does not work for stochastic dynamic games. In a dynamic environment that unfolds over time, the agents' intertemporal payoffs depend on the expected future controls and payments, and a direct extension of the VCG mechanism is not sufficient to guarantee incentive compatibility. A fundamental difference between dynamic and static mechanism design is that in the former, an agent can bid an untruthful utility function conditional on his past bids (which need not be truthful) and past allocations (from which he can make an inference about other agents' utility functions). Here we should note that for dynamic deterministic systems, by collecting the VCG payments as a lump sum of all the payments over the entire time horizon at the beginning, incentive compatibility is still assured. However, for a dynamic \emph{stochastic} system, the states are private random variables and there is no incentive for agents to bid their states truthfully if VCG payments are collected in the same way as for dynamic deterministic systems. In fact, it does not appear to be feasible to construct mechanisms that ensure the dominance of dynamic truth-telling for agents comprised of general stochastic dynamic systems. 

Nevertheless, for the special case of Linear-Quadratic-Gaussian (LQG) agents, where agents have linear state equations, quadratic utility functions and additive white Gaussian noise, we show in this paper that a dynamic stochastic extension of the VCG mechanism does exist, based on a careful construction of a sequence of layered payments over time. For a set of LQG agents, we propose a modified layered version of the VCG mechanism for payments that decouples the intertemporal effect of current bids on future payoffs, and prove that truth-telling of dynamic states forms a dominant strategy if system parameters are known and agents are rational. ``Rational" means that an agent will adopt a dominant strategy if it is the unique one, and it will act on the basis that it and others will do so at future times.

An important example of a problem needing such optimal dynamic coordination of stochastic agents arises in the ISO problem of power systems. In general, agents may have different approaches to responding to the prices set by the ISO. If each agent acts as a \emph{price taker}, i.e., it honestly discloses its energy consumption at the announced prices, a \emph{competitive equilibrium} would be reached among agents. However, when each agent becomes a \emph{price anticipator}, and it is critical for the ISO to design a market mechanism that is strategy-proof (i.e., incentive compatible). The challenge for the ISO is to determine a bidding scheme between all agents (producers and consumers) and the ISO that maximizes social welfare, while taking into account the stochastic dynamic models of agents, since renewable energy resources such as solar/wind are stochastic and dynamic in nature, as are consumptions by loads which are influenced by factors such as local temperatures and thermal inertias of facilities. Currently, the ISO solicits bids from generators and Load Serving Entities (LSEs) and operates two markets: a day-ahead market and a real-time market. The day-ahead market lets market participants commit to buy or sell wholesale
electricity one day before the operating day, to satisfy energy demand bids and
to ensure adequate scheduling of resources to meet the next
day's anticipated load. The real-time market lets market
participants buy and sell wholesale electricity during the
course of the operating day to balance the differences between
day-ahead commitments and the actual real-time demand and
production \cite{isone}. Our layered VCG mechanism fits perfectly in the real-time market, as we will see in the sequel.

The rest of the paper is organized as follows. In Section \ref{RWCDC}, a survey of related works is presented. This is followed by a complete description of the classic VCG framework for the static and dynamic deterministic problem in Section \ref{SDCDC}. A layered VCG payment scheme is introduced for the dynamic stochastic problem in Section \ref{SCDC}. Section \ref{CRCDC} concludes the paper.

\section{Related Works}\label{RWCDC}

In recent years, several papers have been written with the aim of exploring issues arising in dynamic mechanism design. In order to achieve ex post incentive compatibility, Bergemann and Valimaki \cite{RePEc:cwl:cwldpp:1616} propose a generalization of the VCG mechanism based on the marginal contribution of each agent and show that ex post participation constraints are satisfied under some conditions. Athey and Segal \cite{ECTA:ECTA1375} consider a similar model and focus on budget balance of the mechanism. Pavan et al. \cite{Pavan2009Dynamic} derives first-order conditions under which incentive compatibility is guaranteed by generalizing Mirrlees's \cite{10.2307/2296779} envelope formula of static mechanisms. Cavallo et al. \cite{DBLP:journals/corr/CavalloPS12} considers a dynamic Markovian model and derives a sequence of Groves-like payments which achieves Markov perfect equilibrium. Bapna and Weber \cite{Bapna2005EfficientDA} solves a sequential allocation problem by formulating it as a multi-armed bandit problem. Parkes and Singh \cite{NIPS2003_2432} and Friedman and Parkes \cite{Friedman:2003:PWS:779928.779978} consider an environment with randomly arriving and departing agents and propose a ``delayed'' VCG mechanism to guarantee interim incentive compatibility. Besanko et al. \cite{BESANKO198533} and Battaglini et al. \cite{RePEc:pri:metric:wp046_2012_battaglini_lamba_optm_dyn_contract_10october2012_short.pdf} characterize the optimal infinite-horizon mechanism for an agent modeled as a Markov process, with Besanko considering a linear AR(1) process over a continuum of states, and Battaglini focusing on a two-state Markov chain. Farhadi et al. \cite{10.1007/978-3-319-67540-4_8} propose a dynamic mechanism that is incentive compatible, individual rational, ex-ante budget balance and social efficient based on the set of inference signals. However, their notion of incentive compatibility is in a weaker Nash sense, i.e., given other agents report truthfully, agent $i$'s best reaction is to report truthfully. Our dynamic VCG mechanism on the other hand, guarantees incentive compatibility in weakly dominant strategies, i.e., irrespective of what other agents are bidding, agent $i$'s best strategy is to report truthfully. Bergemann and Pavan \cite{BERGEMANN2015679} have an excellent survey on recent research in dynamic mechanism design and a more recent survey paper by Bergemann and Valimaki \cite{RePEc:cwl:cwldpp:2102} further discusses dynamic mechanism design problem with risk-averse agents and the relationship between dynamic mechanism and optimal contracts. 

\begin{comment}
In order to capture strategic interactions between the ISO and market participants, game theory has been proposed in many recent papers. Sessa et al. \cite{DBLP:journals/corr/SessaWK16} studies the VCG mechanism for electricity markets and derives conditions to ensure collusion and shill bidding are not profitable. Okajima et al. \cite{7320631} propose a VCG-based mechanism that guarantees incentive compatibility and individual rationality for day-ahead market with equality and inequality constraints. Xu et al. \cite{7297853} shows that the VCG mechanism always results in higher per-unit electricity prices than the LMP mechanism under any given set of reported supply curves, and that the difference between the per-unit prices resulting from the two mechanisms is negligibly small. Bistarelli et al. \cite{DBLP:journals/corr/BistarelliCGM16} derives a VCG-based mechanism to drive users in shifting energy consumption during peak hours. In Samadi et al. \cite{6102349}, utility companies are proposed to use VCG mechanism to collect private information of electricity users to optimize the energy consumption schedule.
\end{comment}

To our knowledge, there does not appear to be any result that ensures dominance of dynamic truth-telling for agents comprised of LQG systems.
\section{The Static and Dynamic Deterministic VCG}\label{SDCDC}
Let us begin by considering the simpler static deterministic case. Suppose there are $N$ agents, with each agent having a utility function $F_{i}(u_{i})$, where $u_{i}$ is the amount of energy produced/consumed by agent $i$. $F_{i}(u_{i})$ depends only on its own consumption/generation $u_{i}$. However, for convenience of notation, we will occasionally abuse notation and write $F_{i}(u)$ with the implicit understanding that it only depends on the $i$-th component $u_{i}$ of $u$.

Let $\boldsymbol{u}:=(u_{1},...,u_{N})^{T}$, $\boldsymbol{u}_{-i}:=(u_{1},...,u_{i-1},u_{i+1},...,u_{N})^{T}$, and let $F := (F_{1}, \ldots , F_{n})$.  

In the VCG mechanism, each agent is asked to bid its utility function $\hat{F}_{i}$. The agent can lie, so $\hat{F}_{i}$ may not be equal to $ F_{i}$. (As for $F$, we denote $\hat{F} := (\hat{F}_{1}, \ldots \hat{F}_{n})$). After obtaining the bids, the ISO calculates $\boldsymbol{u^{*}}(\hat{F})$ as the optimal solution to the following problem:
\begin{equation*}
\max_{\boldsymbol{u}}\sum_{i}\hat{F}_{i}(u_{i})
\end{equation*}
\noindent subject to

\begin{equation*}
\sum_{i}u_{i}=0.
\end{equation*}
The last equality ensures balance between generation and consumption. Each agent is then assigned to produce/consume $u_{i}^{*}(\hat{F})$, and is obliged to do so, accruing a utility $F_{i}\left(u_{i}^{*}\left(\hat{F}\right)\right)$. Following the rules that it has announced a priori before receiving the bids, the ISO then collects a payment $p_{i}(\hat{F})$ from agent $i$, defined as follows:
\begin{equation*}
p_{i}(\hat{F}):=\sum_{j\ne i}\hat{F}_{j}(\boldsymbol{u}^{(i)})-\sum_{j\ne i}\hat{F}_{j}(\boldsymbol{u^{*}}),
\end{equation*} 
\noindent where $\boldsymbol{u}^{(i)}$ is defined as the optimal solution to the following problem:
\begin{equation*}
\max_{\boldsymbol{u}_{-i}}\sum_{j\ne i}\hat{F}_{j}(u_{j})
\end{equation*}
\noindent subject to
\begin{equation*}
\sum_{j\ne i}u_{j}=0.
\end{equation*}
We can see that $p_{i}$ is the cost to the rest of the agents due to agent $i$'s presence, which leads agents to internalize the social externality.

In fact, the VCG mechanism is a special case of the Groves mechanism \cite{10.2307/1914085}, where payment $p_{i}$ is defined as:
\begin{equation*}
p_{i}(\hat{F})=h_{i}(\boldsymbol{\hat{F}}_{-i})-\sum_{j\ne i}\hat{F}_{j}(\boldsymbol{u}^{*}(\hat{F})).
\end{equation*}
where $h_{i}$ is any arbitrary function and $\hat{F}_{-i}:=(\hat{F}_{1},..,\hat{F}_{i-1},\hat{F}_{i+1},...,\hat{F}_{N})$. Truth-telling is a dominant strategy in the Groves mechanism \cite{10.2307/1914085}. That is, regardless of other agents' strategies, an agent cannot do better than truthfully declaring its utility function.

\begin{theorem}\cite{10.2307/1914085}\label{vcgic}
	Truth-telling ($\hat{F}_{i}
	\equiv F_{i}$) is the dominant strategy equilibrium in Groves mechanism.
\end{theorem}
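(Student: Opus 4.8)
The plan is to fix an arbitrary profile of competing bids $\hat{F}_{-i}$ and to show that agent $i$ cannot do better by misreporting than by declaring $\hat{F}_i \equiv F_i$, no matter what the others bid. Agent $i$'s realized payoff is its true utility at the ISO's chosen allocation minus its Groves payment,
\begin{equation*}
F_i(u_i^*(\hat{F})) - p_i(\hat{F}) = F_i(u_i^*(\hat{F})) + \sum_{j\ne i}\hat{F}_j(u_j^*(\hat{F})) - h_i(\hat{F}_{-i}),
\end{equation*}
where I have substituted the Groves payment rule. First I would note that $h_i(\hat{F}_{-i})$ is independent of agent $i$'s own report and therefore an additive constant that cannot influence its optimal choice; the problem thus reduces to maximizing
\begin{equation*}
G_i(\hat{F}_i) := F_i(u_i^*(\hat{F})) + \sum_{j\ne i}\hat{F}_j(u_j^*(\hat{F}))
\end{equation*}
over agent $i$'s report $\hat{F}_i$.

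The key step is to recognize that $G_i$ coincides with the very objective the ISO maximizes when agent $i$ reports truthfully. For any balance-feasible allocation $\boldsymbol{u}$ with $\sum_k u_k = 0$, set $\Phi(\boldsymbol{u}) := F_i(u_i) + \sum_{j\ne i}\hat{F}_j(u_j)$. If agent $i$ submits $\hat{F}_i \equiv F_i$, the ISO solves $\max_{\boldsymbol{u}} \Phi(\boldsymbol{u})$ subject to $\sum_k u_k = 0$, so the induced allocation maximizes $\Phi$ over the feasible set and truthful reporting attains $G_i(F_i) = \max_{\boldsymbol{u}:\, \sum_k u_k = 0} \Phi(\boldsymbol{u})$. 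Any other report $\hat{F}_i$ instead induces some allocation $\boldsymbol{u}^*(\hat{F})$ that still satisfies the balance constraint and is therefore feasible for this program, whence
\begin{equation*}
G_i(\hat{F}_i) = \Phi(\boldsymbol{u}^*(\hat{F})) \le \max_{\boldsymbol{u}:\, \sum_k u_k = 0} \Phi(\boldsymbol{u}) = G_i(F_i).
\end{equation*}

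This inequality shows that no misreport can strictly improve on truth-telling, so $\hat{F}_i \equiv F_i$ is a weakly dominant strategy for each agent $i$, and the profile in which all agents report truthfully is a dominant-strategy equilibrium. I do not expect a genuine obstacle: the whole argument rests on the single structural fact that the Groves payment aligns each agent's private objective with the planner's social objective, up to the irrelevant constant $h_i(\hat{F}_{-i})$. The only point that needs care is to confirm that an arbitrary misreport still yields a balance-feasible allocation, so that its value under $\Phi$ is dominated by the true optimum; this is immediate since every report induces a feasible solution of the ISO's balanced program, and ties in the optimizer merely make the dominance weak rather than strict.
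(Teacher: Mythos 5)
Your proposal is correct and follows essentially the same argument as the paper: both exploit that $h_i(\hat{F}_{-i})$ is unaffected by agent $i$'s report, and both reduce the comparison to the fact that the truthful-report allocation maximizes the mixed objective $F_i(u_i)+\sum_{j\ne i}\hat{F}_j(u_j)$ over the balance-feasible set, while any misreport merely produces another feasible point of that same program. The paper phrases this as a difference of net utilities being $\bar{F}(\bar{u}^*)-\bar{F}(u^*)\ge 0$, which is exactly your inequality $G_i(\hat{F}_i)\le G_i(F_i)$ in different notation.
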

\begin{proof}
	\begin{comment}
	The problem that agent $i$ faces is the following:
	\begin{equation*}
	\max_{\hat{F}_{i}}\left[F_{i}\left(u_{i}\left(\hat{F}_{i},\hat{F}_{-i}\right)\right)-p_{i}(\hat{F})\right],
	\end{equation*}
	\begin{equation*}
	=\max_{\hat{F}_{i}}\left[F_{i}\left(u_{i}\left(\hat{F}_{i},\hat{F}_{-i}\right)\right)-h_{i}(\boldsymbol{\hat{F}}^{(i)})+\sum_{j\ne i}\hat{F}_{j}\left(\boldsymbol{u^{*}}\right)\right],
	\end{equation*}
	Because $\boldsymbol{\hat{F}}^{(i)}$ does not depend on $\hat{F}_{i}$, it is equivalent to
	\begin{equation*}
	\max_{\hat{F}_{i}}\left[F_{i}\left(u_{i}\left(\hat{F}_{i},\hat{F}_{-i}\right)\right)+\sum_{j\ne i}\hat{F}_{j}\left(\boldsymbol{u^{*}}\left(\hat{F}_{i},\hat{F}_{-i}\right)\right)\right].
	\end{equation*}
	So, agent $i$ would like to pick a declaration $\hat{F}_{i}$ that will lead the mechanism to pick an $\boldsymbol{u}$ which solves
	\begin{equation}\label{maxvcg}
	\max_{\boldsymbol{u}}\left[F_{i}\left(u_{i}\right)+\sum_{j\ne i}\hat{F}_{j}\left(u_{j}\right)\right].
	\end{equation}
	Under the VCG mechanism,
	\begin{equation*}
	\boldsymbol{u}=\arg\max_{\boldsymbol{u}}\left[\hat{F}_{i}\left(u_{i}\right)+\sum_{j\ne i}\hat{F}_{j}\left(u_{j}\right)\right].
	\end{equation*}
	The VCG mechanism will choose $\boldsymbol{u}$ in a way that solves the maximization problem in \eqref{maxvcg} when $\hat{F}_{i}=F_{i}$. Thus, truth-telling is a dominant strategy for agent $i$.
	\end{comment}
	Suppose agent $i$ announces the true utility function $F_{i}$. Let
	$\bar{F}:=(\hat{F}_{1},...\hat{F}_{i-1},F_{i},\hat{F}_{i+1},...,\hat{F}_{N})$ and $\bar{F}_{-i}:=(\hat{F}_{1},...\hat{F}_{i-1},\hat{F}_{i+1},...,\hat{F}_{N})$. Let $\bar{F}(u):=\sum_{i}\bar{F}_{i}(u_{i})$.  Let $\bar{u}^{*}_{i}$ be what ISO assigns, and $p_{i}(\bar{F})$ be what ISO charges, when $\bar{F}$ is announced by the agents.  Let $u^{*}_{i}$ be what ISO assigns and $p_{i}(\hat{F})$ be what ISO charges when $\hat{F}$ is announced by agents.
	
	Note that $\bar{F}_{-i} = \hat{F}_{-i}$, and so
	$h_i(\bar{F}_{-i}) = h_i(\hat{F}_{-i})$. Hence for agent $i$, the difference between the net utilities resulting from announcing $F_{i}$ and $\hat{F}_{i}$ is
	\begin{align*}
	&\Big[F_{i}(\bar{u}^{*}_{i})-p_{i}(\bar{F})\Big]-\Big[F_{i}(u^{*}_{i})
	-p_{i}(\hat{F})\Big]\\
	=&F_{i}(\bar{u}^{*}_{i})-h_{i}(\bar{F}_{-i})+\sum_{j\ne i}\hat{F}_{j}(\bar{u}^{*}_{i})
	-F_{i}(u^{*}_{i})+h_{i}(\hat{F}_{-i})\\&-\sum_{j\ne i}\hat{F}_{j}(u^{*}_{i})=\bar{F}(\bar{u}^{*})-\bar{F}(u^{*})\ge 0,
	\end{align*}
	where the last inequality holds since $\bar{u}^{*}$ is the optimal solution to the social welfare problem with utility functions $\bar{F}$.
\end{proof}

The above VCG scheme can be extended to the important case of dynamic systems. We first consider the deterministic case. This is a straightforward extension of the static case since one can consider the sequence of actions taken by an agent as a vector action. That is, one can simply view the problem as an open-loop control problem, where the entire decision on the sequence of controls to be employed is taken at the initial time, and so treatable as a static problem.

 For agent $i$, let $F_{i,t}(x_{i}(t),u_{i}(t))$ be the one-step utility function at time $t$. Suppose that the state of agent $i$ evolves as:
\begin{equation*}
x_{i}(t+1)=g_{i,t}(x_{i}(t),u_{i}(t)).
\end{equation*}
The ISO asks each agent $i$ to bid its one-step utility functions $\{\hat{F}_{i,t}(x_{i}(t),u_{i}(t)), t = 0, 1, \ldots , T-1\}$, state equation $\{\hat{g}_{i,t}, t = 0, 1, \ldots , T-1\}$, and initial condition $\hat{x}_{i,0}$. The ISO then calculates $(x^{*}_{i}(t),u^{*}_{i}(t))$ as the optimal solution, assumed to be unique, to the following utility maximization problem:
\begin{equation*}
\max \ \sum_{i=1}^{N}\sum_{t=0}^{T-1}\hat{F}_{i}(x_{i}(t),u_{i}(t))
\end{equation*}
\noindent subject to
\begin{equation*}
x_{i}(t+1)=\hat{g}_{i}(x_{i}(t),u_{i}(t)),\text{ for }\forall i \text{ and } \forall t,
\end{equation*}
\begin{equation*}
\sum_{i=1}^{N}u_{i}(t)=0,\text{ for }\forall t,
\end{equation*}
\begin{equation*}
x_{i}(0)=\hat{x}_{i,0},\text{ for }\forall i.
\end{equation*}
We denote this problem as $(\hat{F},\hat{g},\hat{x}_{0})$. We can extend the notion of VCG payment $p_{i}$ to the deterministic dynamic system as follow. Let
\begin{equation*}
p_{i}:=\sum_{j\ne i}\sum_{t=0}^{T-1}\hat{F}_{j}(x^{(i)}_{j}(t),u^{(i)}_{j}(t))-\sum_{j\ne i}\sum_{t=0}^{T-1}\hat{F}_{j}(x^{*}_{j}(t),u^{*}_{j}(t)).
\end{equation*}
Here $(x^{(i)}_{i}(t),u^{(i)}_{i}(t))$ is the optimal solution to the following problem, which is assumed to be unique:
\begin{equation*}
\max \ \sum_{j\ne i}\sum_{t=0}^{T-1}\hat{F}_{j}(x_{j}(t),u_{j}(t))
\end{equation*}
\noindent subject to
\begin{equation*}
x_{j}(t+1)=\hat{g}_{j}(x_{j}(t),u_{j}(t)),\text{ for }j\ne i \text{ and } \forall t,
\end{equation*}
\begin{equation*}
\sum_{j\ne i}u_{j}(t)=0,\text{ for }\forall t,
\end{equation*}
\begin{equation*}
x_{j}(0)=\hat{x}_{j,0},\text{ for }j\ne i.
\end{equation*}
More generally, we can consider a Groves payment $p_{i}$ defined as:
\begin{equation*}
p_{i}:=h_{i,t}(\boldsymbol{\hat{F}}_{-i})-\sum_{j\ne i}\sum_{t=0}^{T-1}\hat{F}_{j}(x^{*}_{j}(t),u^{*}_{j}(t)).
\end{equation*}
where $h_{i,t}$ is any arbitrary function. We first show in the following theorem that truth-telling is still the dominant strategy equilibrium in Groves mechanism.
\begin{theorem}\label{dtruth}
Truth-telling of utility function, state dynamics and initial condition ($\hat{F}_{i}=F_{i}$, $\hat{g}_{i}=g_{i}$ and $\hat{x}_{i,0}=x_{i,0}$) is a dominant strategy equilibrium under the Groves mechanism for a dynamic system.
\end{theorem}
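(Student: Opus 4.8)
The plan is to reduce the dynamic deterministic problem to the static Groves setting of Theorem \ref{vcgic} by vectorizing each agent's trajectory, so that a single application of the static result yields the claim. First I would stack each agent's control sequence into a vector $\mathbf{u}_i := (u_i(0), \ldots, u_i(T-1))$ and observe that, given a bid $(\hat{F}_i, \hat{g}_i, \hat{x}_{i,0})$, the entire state trajectory is determined recursively by $x_i(t+1) = \hat{g}_{i,t}(x_i(t), u_i(t))$ with $x_i(0) = \hat{x}_{i,0}$. This lets me define a composite declared utility
\begin{equation*}
\hat{\tilde{F}}_i(\mathbf{u}_i) := \sum_{t=0}^{T-1}\hat{F}_{i,t}(x_i(t), u_i(t)),
\end{equation*}
which is a function of the control vector alone. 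The true composite utility $\tilde{F}_i(\mathbf{u}_i)$ is obtained in the same way from the true data $(F_i, g_i, x_{i,0})$, so truthful reporting of the three components is one way of achieving $\hat{\tilde{F}}_i \equiv \tilde{F}_i$.

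Next I would rewrite the ISO's utility-maximization problem and the payment $p_i$ purely in terms of these composite utilities. The objective $\sum_i\sum_t \hat{F}_{i,t}$ becomes $\sum_i \hat{\tilde{F}}_i(\mathbf{u}_i)$, and the per-time balance constraints $\sum_i u_i(t)=0$ for all $t$ collapse into a single (vector) linear constraint coupling the $\mathbf{u}_i$ across agents. Likewise $\sum_{j\ne i}\sum_t \hat{F}_j(x^*_j(t),u^*_j(t)) = \sum_{j\ne i}\hat{\tilde{F}}_j(\mathbf{u}^*_j)$, so the dynamic Groves payment $p_i = h_{i,t}(\hat{F}_{-i}) - \sum_{j\ne i}\hat{\tilde{F}}_j(\mathbf{u}^*_j)$ takes exactly the static Groves form, with $h_{i,t}(\hat{F}_{-i})$ playing the role of the arbitrary function of competitors' bids. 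The problem is then formally identical to that of Theorem \ref{vcgic}, with the vector action $\mathbf{u}_i$ replacing the scalar $u_i$, the composite $\hat{\tilde{F}}_i$ replacing $\hat{F}_i$, and the separable-objective-plus-coupling-constraint structure preserved.

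The one point requiring care --- and the step I expect to be the main obstacle --- is the correct accounting of the agent's realized net utility. Because the system is deterministic, executing the assigned controls $\mathbf{u}^*_i$ under the agent's true dynamics from its true initial state yields a trajectory whose accrued utility equals $\tilde{F}_i(\mathbf{u}^*_i)$, irrespective of what dynamics or initial condition were declared: misreporting $g_i$ or $x_{i,0}$ alters only the ISO's internally computed $x^*_i$, not the agent's physically realized payoff, and the balance constraint (being imposed on controls) remains satisfied. Having established that the realized net utility is $\tilde{F}_i(\mathbf{u}^*_i) - p_i$ with $p_i$ of static Groves form, I would invoke Theorem \ref{vcgic} essentially verbatim: announcing $\hat{\tilde{F}}_i \equiv \tilde{F}_i$ weakly dominates any other declaration, and since truthful reporting of $(F_i, g_i, x_{i,0})$ realizes this composite truth-telling, it is a dominant strategy. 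The closing inequality mirrors the static proof, reducing to $\bar{F}(\bar{u}^*) - \bar{F}(u^*) \ge 0$ by optimality of $\bar{u}^*$ for the social-welfare problem under the composite utilities.
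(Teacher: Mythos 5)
Your proposal is correct, and at the conceptual level it follows the route the paper itself announces just before Theorem~\ref{dtruth}: treat each agent's control sequence as a single vector action, so the dynamic deterministic problem becomes an open-loop, hence static, one. The execution, however, genuinely differs from the paper's written proof, and the difference matters. The paper keeps the states and the reported dynamics as explicit variables and constraints and redoes the static Groves algebra on trajectories, closing with the inequality $\sum_t \bar F(\bar x^*(t),\bar u^*(t)) - \sum_t \bar F(x^*(t),u^*(t)) \ge 0$, justified by optimality of $(\bar x^*,\bar u^*)$ for the problem $(\bar F,\bar g,\bar x_0)$. That closing step is delicate: the allocation $(x^*,u^*)$ computed under the lie has agent $i$'s trajectory generated by $\hat g_i$ from $\hat x_{i,0}$, so it is not in general feasible for the problem $(\bar F,\bar g,\bar x_0)$, and the liar's realized utility is there evaluated on the ISO-computed trajectory $x^*_i$ rather than on the physically realized one. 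Your composite-utility formulation removes both issues at once: by absorbing the declared dynamics and initial condition into control-only utilities $\hat{\tilde F}_i$ (and the true data into $\tilde F_i$), the feasible set becomes the bid-independent set of balance-feasible control profiles, every assigned control profile is automatically feasible for the truthful problem, and the realized payoff is unambiguously $\tilde F_i(\mathbf{u}^*_i)$ regardless of what was declared --- exactly the accounting point you flagged as the main obstacle. After that, Theorem~\ref{vcgic} applies essentially verbatim, since its proof uses only that the feasible set is common to the two bid profiles and that the agent's achievable reports include the truthful composite utility. In short, your reduction is the paper's intended argument made rigorous; the paper's direct expansion is more self-contained but glosses over the feasibility and realized-utility subtlety that your formulation resolves.
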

\begin{proof}
	Let $\hat{F}:=(\hat{F}_{1},...,\hat{F}_{i},...,\hat{F}_{N})$, $\hat{g}:=(\hat{g}_{1}...,\hat{g}_{i},...,\hat{g}_{N})$, and $\hat{x}_{0}:=(\hat{x}_{1,0},...,\hat{x}_{i,0},...,\hat{x}_{N,0})$.
	Suppose agent $i$ announces the true one-step utility function $F_{i}$, true state dynamics $g_{i}$, and true initial condition $x_{i,0}$. Let
	 $\bar{F}:=(\hat{F}_{1},...\hat{F}_{i-1},F_{i},\hat{F}_{i+1},...,\hat{F}_{N})$, $\bar{g}:=(\hat{g}_{1},...\hat{g}_{i-1},g_{i},\hat{g}_{i+1},...,\hat{g}_{N})$, and $\bar{x}_{0}:=(\hat{x}_{1,0},...\hat{x}_{i-1,0},x_{i,0},\hat{x}_{i+1,0},...,\hat{x}_{N,0})$. Let $(\bar{x}^{*}_{i}(t),\bar{u}^{*}_{i}(t))$ be what ISO assigns and $p_{i}(\bar{F},\bar{g},\bar{x}_{0})$ be what ISO charges when $(\bar{F},\bar{g},\bar{x_{0}})$ is announced by agents. Let $(x^{*}_{i}(t),u^{*}_{i}(t))$ be what ISO assigns and $p_{i}(\hat{F},\hat{g},\hat{x}_{0})$ be what ISO charges when $(\hat{F},\hat{g},\hat{x}_{0})$ is announced by agents. Let $\bar{F}(x_{i}(t),u_{i}(t)):=\sum_{i}\bar{F}_{i}(x_{i}(t),u_{i}(t))$.
	 
	 For agent $i$, the difference between net utility resulting from announcing $(F_{i},g_{i},x_{i,0})$ and $(\hat{F}_{i},\hat{g}_{i},\hat{x}_{i,0})$ is
	 \begin{align*}
	 \Big[\sum_{t}F_{i}(\bar{x}^{*}_{i}(t),\bar{u}^{*}_{i}(t))-p_{i}(\bar{F},\bar{g},\bar{x}_{0})\Big]-\Big[\sum_{t}&F_{i}(x^{*}_{i}(t),u^{*}_{i}(t))\\
	 &-p_{i}(\hat{F},\hat{g},\hat{x}_{0})\Big]
	 \end{align*}
	 \begin{align*}
	 &=\sum_{t}F_{i}(\bar{x}^{*}_{i}(t),\bar{u}^{*}_{i}(t))-h_{i,t}(\bar{F}_{-i})+\sum_{j\ne i}\sum_{t}\hat{F}_{j}(\bar{x}^{*}_{i}(t),\bar{u}^{*}_{i}(t))
	 \\
	 &-\sum_{t}F_{i}(x^{*}_{i}(t),u^{*}_{i}(t))+h_{i,t}(\hat{F}_{-i})-\sum_{j\ne i}\sum_{t}\hat{F}_{j}(x^{*}_{i}(t),u^{*}_{i}(t))\\
	 &=\sum_{t}\bar{F}(\bar{x}^{*}_{i}(t),\bar{u}^{*}_{i}(t))-\sum_{t}\bar{F}(x^{*}_{i}(t),u^{*}_{i}(t))\ge 0,
	 \end{align*}
	 since $(\bar{x}^{*}_{i}(t),\bar{u}^{*}_{i}(t))$ is the optimal solution to the problem $(\bar{F},\bar{g},\bar{x}_{0})$.
\end{proof}

\section{The Dynamic Stochastic VCG}\label{SCDC}
In the above section, we have shown that the VCG mechanism can be naturally extended to dynamic deterministic systems by employing an open-loop solution. However, when agents are dynamic stochastic systems, we need to consider closed-loop solutions. Such closed-loop controls depend on the observations of the agents, which are generally private. So the states of the system are private random variables. Hence the problem becomes one of additionally ensuring that each agents reveals its ``true" states at all times. This additional complication appears to prevent a solution for general systems. However, as we will see, in the case of LQG agents one can indeed ensure the dominance of truth telling strategies that reveal the true states. However, it does not appear feasible to also then ensure that the agents reveal their true state equations and cost functions.

To obtain the correct payment structure, we will need to carefully redefine the VCG payments such that incentive compatibility is still assured for the special case of linear quadratic Gaussian (LQG) systems. As noted above, one cannot treat the system as an open-loop system as in the previous section. In particular, this necessitates collecting payments from agents at each step. For agent $i$, let $w_{i}\sim\mathcal{N}(0,\sigma_{i})$ be the discrete-time additive Gaussian white noise process affecting state $x_{i}(t)$ via:
	\begin{equation*}
	x_{i}(t+1)=a_{i}x_{i}(t)+b_{i}u_{i}(t)+w_{i}(t),
	\end{equation*}
	where $x_{i}(0)\sim\mathcal{N}(0,\zeta_{i})$ and is independent of $w_{i}$.
	Each agent has a one-step utility function
	\begin{equation*}
	F_{i}(x_{i}(t),u_{i}(t))=q_{i}x_{i}^{2}(t)+r_{i}u_{i}^{2}(t).
	\end{equation*} 
	Let $X(t)=[x_{1}(t),...,x_{N}(t)]^{T}$, $U(t)=[u_{1}(t),...,u_{N}(t)]^{T}$ and $W(t)=[w_{1}(t),...,w_{N}(t)]^{T}$. Let $Q=diag(q_{1},...,q_{N})\leq 0$, $R=diag(r_{1},...,r_{N})<0$, $A=diag(a_{1},...,a_{N})$, $B=diag(b_{1},...,b_{N})$,
	$\Sigma=diag(\sigma_{1},...,\sigma_{N})>0$ and $Z=diag(\zeta_{1},...,\zeta_{N})>0$. Let $RSW:=\sum_{i=1}^{N}\sum_{t=0}^{T-1}[X^{T}(t)QX(t)+U^{T}(t)RU(t)]$ be the random variable denoting the social welfare of the agents, and let $SW:=\mathbb{E}[RSW]$ denote the expected social welfare. The random social welfare could also be called the ``ex-post social welfare''.
	The ISO aims to maximize the social welfare, leading to the following LQG problem:
	\begin{equation*}
	\max \ \mathbb{E}\sum_{i=1}^{N}\sum_{t=0}^{T-1}\left[X^{T}(t)QX(t)+U^{T}(t)RU(t)\right]
	\end{equation*}
	\noindent subject to
	\begin{equation*}
	X(t+1)=AX(t)+BU(t)+W(t),
	\end{equation*}
	\begin{equation}\label{ubalance}
	1^{T}U(t)=0,\text{ for }\forall t,
	\end{equation}
	\begin{equation*}
	X(0)\sim\mathcal{N}(0,Z), W\sim\mathcal{N}(0,\Sigma).
	\end{equation*}
	We will rewrite the random social welfare and thereby the social welfare in terms more convenient for us. We will decompose $X(t)$ as:
	\begin{equation}\label{xdecompose}
	X(t):=\sum_{s=0}^{t}X(s,t),\ 0\le t\le T-1,
	\end{equation}
	where $X(s,s) := W(s-1)$ for $s\geq 1$ and $X(0,0):=X(0)$. Let
	\begin{equation}\label{projection}
	X(s,t) := AX(s,t-1)+BU(s,t-1), \text{ } 0 \leq s \leq t-1,
	\end{equation}
	with $U(s,t)$ yet to be specified. We suppose that $U(t)$ can also be decomposed as:
	\begin{equation}\label{udecompose}
	U(t):=\sum_{s=0}^{t}U(s,t),\ 0\le t \le T-1.
	\end{equation}
	Then regardless of how the $U(s,t)$'s are chosen, as long as the $U(s,t)$'s for $0 \leq s \leq t$ are indeed a decomposition of $U(t)$, i.e., \eqref{udecompose} is satisfied, the random social welfare can be written in terms of $X(s,t)$'s and $U(s,t)$'s as:	
	\begin{equation*}
	RSW=\sum_{s=0}^{T-1}L_{s},
	\end{equation*}
	where $L_{s}$ for $s\ge 1$ is defined as:
	\begin{align}\label{ls}
	L_{s}:&=\sum_{t=s}^{T-1}\bigg[X^{T}(s,t)QX(s,t)+U^{T}(s,t)RU(s,t)\\\nonumber+&2\left(\sum_{\tau=0}^{s-1}X(\tau,t)\right)QX(s,t)+2\left(\sum_{\tau=0}^{s-1}U(\tau,t)\right)RU(s,t)\bigg],
	\end{align}
	and $L_{0}$ is defined as:
	\begin{equation*}
	L_{0}:=\sum_{t=0}^{T-1}\Big[X^{T}(0,t)QX(0,t)+U^{T}(0,t)RU(0,t)\Big].
	\end{equation*}
	%\begin{equation*}
	%L_{s}:=\sum_{t=s}^{T-1}\left[\left(X(t,t)+\sum_{\tau=0}^{t-1}X(\tau,t)\right)Q\left(X(t,t)+\sum_{\tau=0}^{t-1}X(\tau,t)\right)+\left(U(t,t)+\sum_{\tau=0}^{t-1}U(\tau,t)\right)R\left(U(t,t)+\sum_{\tau=0}^{t-1}U(\tau,t)\right)\right]
	%\end{equation*}
	Hence,
	\begin{equation*}
	SW=\mathbb{E}\sum_{s=0}^{T-1}L_{s}.
	\end{equation*}
	
	In the scheme to follow the ISO will choose all $U(s,t)$'s for different $t$'s at time $s$ based on the information it has at time $s$. (Note that $t \geq s$). Hence $X(s,t)$ is completely determined by $W(s-1)$, and $U(s,t)$ for $s \leq t \leq T-1$. Indeed $X(s,t)$ can be regarded as the contribution to $X(t)$ of these variables.
	
	Here we assume that the ISO knows the true system parameters $Q$, $R$, $A$ and $B$. This may hold if the ISO has previously run the VCG bidding scheme for a dynamic deterministic system, or equivalently, a day-ahead market, and system parameters remain unchanged when agents participate in the real-time stochastic market. 
	
	We will consider a scheme where at each stage, the ISO asks the agents to bid their $x_{i}(s,s)$ (which is equal to $w_{i}(s-1)$) at each time $s$, for $0 \leq s \leq T-1$. Let $\hat{x}_{i}(s,s)$ be what the agents actually bid, since they may not tell the truth. Based on their bids $\hat{x}_i(s,s)$ for $1 \leq i \leq N$, the ISO solves the following problem:	%\begin{equation*}
	%\max\ \mathbb{E}\left(\sum_{t=0}^{T-1}\left(\bm{x}^{T}(t)\hat{\bm{q}}\bm{x}(t)+\bm{u}^{T}(t)\hat{\bm{r}}\bm{u}(t)\right)+\bm{x}^{T}(T)\hat{\bm{q}}_{f}\bm{x}(T)\right)
	%\end{equation*}
	\begin{equation*}
	\max \ L_{s}
	\end{equation*} 
	\noindent subject to
	%\begin{equation}\label{stateequation}
	%\bm{x}(t+1)=\hat{\bm{a}}\bm{x}(t)+\hat{\bm{b}}\bm{u}(t)+\bm{w}(t),
	%\end{equation}
	%\begin{equation}\label{balance}
	%\bm{1}^{T}\bm{u}(t)=0,\text{ for }\forall t.
	%\end{equation}
	%\begin{equation*}
	%x_{i}(t+1)=\hat{g}_{i}^{s}(x_{i}(t),u_{i}(t),w_{i}(t)),\text{ for }\forall i \text{ and } s\le t \le T-1,
	%\end{equation*}
	\begin{equation*}
	1^{T}U(s,t)=0,\text{ for } s\le t \le T-1,
	\end{equation*}
	\begin{equation*}
	\hat{X}(s,s)=[\hat{x}_{1}(s,s),...,\hat{x}_{N}(s,s)]^{T}.
	\end{equation*}
	The variables $\hat{X}(s,t)$ for $t>s$ are defined as $\hat{X}(s,t)=A\hat{X}(s,t-1)+BU(s,t-1)$, that is, with zero noise in the state variable updates starting from the ``initial condition'' $\hat{X}(s,s)$.
	
	The interpretation is the following. Based on the bids, $\hat{X}(s,s)$, which is supposedly a bid of $W(s-1)$, the ISO calculates the trajectory of the linear systems from time $s$ onward, assuming zero noise from that point on. It then allocates consumptions/generations $U(s,t)$ for future periods $t$ for the corresponding deterministic linear system, with balance of consumption and production \eqref{ubalance} at each time $t$. These can be regarded as taking into account the consequences of the disturbance occurring at time $s$. More specifically, $X(s,t)$ is the trajectory resulting from the disturbance $W(s-1)$ at time $s$, and $U(s,t)$ is the adjustment made at time $s$ to allocation at time $t$ due to disturbance at time $s$.
	
	Next, the ISO collects a payment $p_{i}(s)$ from agent $i$ at time $s$ as:
	\begin{align*}
	p_{i}(s):=h_{i}(\hat{X}_{-i}(s,s))-\sum_{j\ne i}\sum_{t=s}^{T-1}\bigg[q_{j}\hat{x}_{j}^{2}(s,t)+r_{j}u_{j}^{*2}(s,t)\\+2q_{j}\left(\sum_{\tau=0}^{s-1}\hat{x}_{j}(\tau,t)\right)\hat{x}_{j}(s,t)+2r_{j}\left(\sum_{\tau=0}^{s-1}u_{j}(\tau,t)\right)u_{j}^{*}(s,t)\bigg],
	\end{align*}
	where $\hat{X}_{-i}(s,s)=[\hat{x}_{1}(s,s),...,\hat{x}_{i-1}(s,s),\hat{x}_{i+1}(s,s),...\\
	,\hat{x}_{N}(s,s)]^{T}$, and $h_{i}$ is any arbitrary function (as in the Groves mechanism).
	
	Before we prove incentive compatibility, we need to define what is meant by the term ``rational agents''. 
	\begin{defn}
	Rational Agents: We say agent $i$ is rational at time $T-1$, if it adopts a dominant strategy, when there is a unique dominant strategy. An agent $i$ is rational at time $t$ if it adopts a dominant strategy at time $t$ under the assumption that all agents including itself are rational at times $t+1, t+2, ..., T-1$, when there is a unique such dominant strategy.
	\end{defn}
	
	%Without loss of generality, we assume agents bid their system parameters truthfully, i.e., $(\hat{q}_{i},\hat{r}_{i})=(q_{i},r_{i})$,  $(\hat{a}_{i},\hat{b}_{i})=(a_{i},b_{i})$ and $\hat{W}_{i}=W_{i}$. We show that under this ISO mechanism, truth-telling of state $x_{i}(s,s)$ is a dominant strategy for each agent.
	Rationality is defined in a recursion fashion.
\begin{theorem}
		Truth-telling of state $\hat{x}_{i}(s,s)$ for $0\le s \le T-1$, i.e., bidding $\hat{x}_i(s,s) = w_i(s-1)$, is the unique dominant strategy for the stochastic ISO mechanism, if system parameters $Q\le 0$, $R<0$, $A$ and $B$ are truthfully known, and agents are rational.
	\end{theorem}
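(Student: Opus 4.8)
The plan is to prove the theorem by backward induction on the stage index $s$, mirroring the recursive definition of rationality. At the last stage $s=T-1$ the bid $\hat{x}_i(T-1,T-1)$ influences only $L_{T-1}$, since there is no future; choosing it is then exactly a static declaration of agent $i$'s (quadratic) layer cost, the ISO allocates $U^{*}(T-1,\cdot)$ to maximize the reported $L_{T-1}$, and $p_i(T-1)$ is a Groves payment for this one-shot problem, so the base case follows verbatim from Theorem~\ref{vcgic}. For the inductive step I would fix a stage $s$, assume by rationality that every agent bids truthfully at $s+1,\dots,T-1$, and show that $\hat{x}_i(s,s)=w_i(s-1)$ uniquely maximizes agent $i$'s expected net payoff from stage $s$ onward.

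The algebraic backbone is the identity $\sum_{s=0}^{T-1}L_s^{i}=\sum_{t=0}^{T-1}\bigl[q_i x_i^{2}(t)+r_i u_i^{2}(t)\bigr]$, where $L_s^{i}$ is agent $i$'s own summand in \eqref{ls}; it follows by substituting $x_i(t)=\sum_{\tau\le t}x_i(\tau,t)$ and $u_i(t)=\sum_{\tau\le t}u_i(\tau,t)$ and regrouping the quadratic cross terms into the layered form of \eqref{ls}. Hence agent $i$'s net payoff is $\sum_{s}\bigl(L_s^{i}-p_i(s)\bigr)$, and the definition of $p_i(s)$ collapses each summand to $L_s^{i}-p_i(s)=L_s^{\mathrm{mix}}-h_i(\hat{X}_{-i}(s,s))$, where $L_s^{\mathrm{mix}}$ is the layer-$s$ welfare evaluated with agent $i$'s \emph{true} trajectory and the \emph{reported} trajectories of the others, under the allocation $U^{*}(s,\cdot)$ the ISO computes from the bids; the $h_i$ term never depends on agent $i$'s own bid.

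The crux, and the step I expect to be the main obstacle, is an intertemporal decoupling lemma: under truthful future play the expected continuation payoff $\mathbb{E}\sum_{s'>s}\bigl(L_{s'}^{i}-p_i(s')\bigr)$ does not depend on $\hat{x}_i(s,s)$. The mechanism I would exploit is that the optimal incremental allocation of every later stage $s'$ is a \emph{past-independent}, zero-mean response to its own fresh noise $W(s'-1)$: in the program $\max_{U(s',\cdot)}L_{s'}$ the only coupling to earlier stages is through the cross terms $2\bigl(\sum_{\tau<s'}X(\tau,t)\bigr)QX(s',t)$ and $2\bigl(\sum_{\tau<s'}U(\tau,t)\bigr)RU(s',t)$, and differentiating these with respect to $U(s',\cdot)$ reproduces precisely the first-order stationarity expressions of the earlier stages' own programs, which vanish because those allocations were already optimized. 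Consequently $U^{*}(s',\cdot)$ and the induced $X(s',\cdot)$ depend only on $W(s'-1)$, have zero mean, and are independent of $\hat{x}_i(s,s)$, so every product $X(s,t)QX(s',t)$ entering a later $L_{s'}$ has zero expectation and the continuation payoff is constant in $\hat{x}_i(s,s)$. Carrying out this cancellation cleanly for a general horizon — verifying that each later optimal response annihilates the earlier-stage gradients — is the technical heart, and it is exactly here that linearity of the dynamics and the diagonally separable, balance-coupled quadratic cost are indispensable; as a by-product it shows the layered allocation reproduces the socially optimal closed-loop LQG control.

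With the continuation payoff pinned down as a bid-independent constant, agent $i$'s problem at stage $s$ reduces to $\max_{\hat{x}_i(s,s)}\mathbb{E}\bigl[L_s^{i}-p_i(s)\bigr]$, i.e., to the one-shot Groves problem in which declaring $\hat{x}_i(s,s)$ amounts to reporting agent $i$'s layer cost. Theorem~\ref{vcgic} then applies: the payoff gap between reporting $w_i(s-1)$ and any other $\hat{x}_i(s,s)$ equals $L_s^{\mathrm{mix}}$ at the truthful allocation minus its value at the misreported allocation, which is nonnegative because the truthful report makes the ISO maximize exactly $L_s^{\mathrm{mix}}$. To obtain the \emph{unique} dominant strategy demanded by the statement I would use $R<0$ (with $Q\le 0$): strict concavity makes $U^{*}(s,\cdot)$ the unique maximizer and a nonconstant affine function of $\hat{x}_i(s,s)$, so any misreport shifts the allocation and strictly lowers $L_s^{\mathrm{mix}}$, hence strictly lowers the payoff. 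This closes the induction and shows that truthful revelation $\hat{x}_i(s,s)=w_i(s-1)$ at every stage is the unique dominant strategy.
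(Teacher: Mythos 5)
Your proposal is correct and follows the same skeleton as the paper's argument: backward induction anchored in the recursive definition of rationality, a base case at $T-1$ that is a one-shot Groves problem, and an inductive step whose substance is that the expected continuation payoff is independent of the current bid, so each stage collapses to a layer-level Groves comparison in which agent $i$'s objective (true own terms plus others' reported terms, the quantity you call $L_s^{\mathrm{mix}}$) is exactly what the ISO maximizes under truthful bidding. Where you genuinely depart from the paper is in how the decoupling lemma is established. The paper invokes certainty equivalence for the balance-constrained LQG problem (citing an external result) to get a linear feedback law $U= K(t)X$ whose gain is the same at every layer, and then subtracts the stage-$(s'-1)$ identity from the stage-$s'$ identity to conclude $U^{*}(s',t)=K(t)X(s',t)$, hence zero-mean under truthful future play. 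You instead argue variationally: the cross-term gradients in layer $s'$ reproduce the stationarity expressions of the earlier layers' programs and therefore vanish, reducing each layer to a pure LQR in its own variables. These are two routes to the identical fact; yours is more self-contained (no appeal to an external certainty-equivalence theorem) and makes transparent why the layered allocation reproduces the socially optimal closed loop, while the paper's is shorter given the cited result. One point you should tighten: because of the balance constraint, the earlier-stage stationarity conditions make the cross-gradient vanish only modulo the multiplier direction (it is proportional to the all-ones vector in each time component), and one must note that such a term is annihilated by $1^{T}U(s',t)=0$, so it genuinely drops out of the constrained optimization. Also, your uniqueness argument needs the allocation to be a nonconstant function of $\hat{x}_i(s,s)$, which can fail when $q_i=0$ (permitted by $Q\le 0$); this is a gap, but it is one the paper's own proof shares rather than resolves.
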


\begin{proof}
We show by backward induction. When agent $i$ is at the last stage $T-1$, it is easy to verify that truth-telling of state (noise) is dominant, i.e., $\hat{x}_{i}(T-1,T-1)=x_{i}(T-1,T-1)$.
We next employ induction and so assume that truth-telling of states is a dominant strategy equilibrium at time $k$. If agents are rational, we can take expectation over $X(s,s)$, $s\ge k$, and since optimal feedback gain does not change with respect to time, the cross terms cancel and agent $i$'s objective aligns with the ISO's. We conclude that truth-telling $\hat{x}_{i}(k-1,k-1)=x_{i}(k-1,k-1)$ is the dominant strategy for agent $i$ at time $k-1$.
\end{proof}

\section{Concluding Remarks}\label{CRCDC}

It remains an open problem how to construct a mechanism that ensures the dominance of dynamic truth-telling for agents comprised of general stochastic dynamic systems. For the special case of LQG agents, by careful construction of a sequence of layered VCG payments over time, the intertemporal effect of current bids on future payoffs can be decoupled, and truth-telling of dynamic states is guaranteed if system parameters are known and agents are rational. Our results can be generalized to LQG systems with partial state observation and time-varying cost and/or state dynamics.

\section*{ACKNOWLEDGMENT}
The authors would like to thank Dr. Le Xie for his valuable comments and suggestions.
\begin{comment}
However, there is also a fatal downside for the VCG mechanism: in general, the sum of
total payments collected by the ISO may be negative. In fact, when agents have quadratic utility functions, the total payments collected from consumers is
indeed not enough to cover the total payments to the suppliers.
In effect, in order to force agents to reveal their true utility
functions, the ISO needs to subsidize the market. It remains to be seen that how the classic VCG mechanism could be modified to ensure budget balance for the ISO. It also remains unclear whether individual agents will be actively participating in the market since payments charged by the ISO may result in a negative net utility for agents.
\end{comment}

\bibliographystyle{IEEEtran}
\bibliography{myReference}

% Generated by IEEEtran.bst, version: 1.14 (2015/08/26)
\begin{thebibliography}{10}
\providecommand{\url}[1]{#1}
\csname url@samestyle\endcsname
\providecommand{\newblock}{\relax}
\providecommand{\bibinfo}[2]{#2}
\providecommand{\BIBentrySTDinterwordspacing}{\spaceskip=0pt\relax}
\providecommand{\BIBentryALTinterwordstretchfactor}{4}
\providecommand{\BIBentryALTinterwordspacing}{\spaceskip=\fontdimen2\font plus
\BIBentryALTinterwordstretchfactor\fontdimen3\font minus
  \fontdimen4\font\relax}
\providecommand{\BIBforeignlanguage}[2]{{%
\expandafter\ifx\csname l@#1\endcsname\relax
\typeout{** WARNING: IEEEtran.bst: No hyphenation pattern has been}%
\typeout{** loaded for the language `#1'. Using the pattern for}%
\typeout{** the default language instead.}%
\else
\language=\csname l@#1\endcsname
\fi
#2}}
\providecommand{\BIBdecl}{\relax}
\BIBdecl

\bibitem{JOFI:JOFI2789}
W.~Vickrey, ``{Counterspeculation, Auctions, and Competitive Sealed Tenders},''
  \emph{The Journal of Finance}, vol.~16, no.~1, pp. 8--37, 1961.

\bibitem{Green1979}
J.~R. Green and J.-J. Laffont, \emph{Incentives in Public Decision
  Making}.\hskip 1em plus 0.5em minus 0.4em\relax Amsterdam: North-Holland,
  1979.

\bibitem{isone}
{ISO New England}, ``{Day-Ahead and Real-Time Energy Markets},''
  \url{http://www.iso-ne.com/markets-operations/markets/da-rt-energy-markets}.

\bibitem{RePEc:cwl:cwldpp:1616}
D.~Bergemann and J.~Valimaki, ``{Dynamic Marginal Contribution Mechanism},''
  Cowles Foundation for Research in Economics, Yale University, Cowles
  Foundation Discussion Papers 1616, Jul. 2007.

\bibitem{ECTA:ECTA1375}
S.~Athey and I.~Segal, ``{An Efficient Dynamic Mechanism},''
  \emph{Econometrica}, vol.~81, no.~6, pp. 2463--2485, 2013.

\bibitem{Pavan2009Dynamic}
A.~Pavan, I.~Segal, and J.~Toikka, ``\BIBforeignlanguage{eng}{{Dynamic
  Mechanism Design: Incentive Compatibility, Profit Maximization and
  Information Disclosure}},'' Evanston, Discussion Paper, Center for
  Mathematical Studies in Economics and Management Science 1501, 2009.

\bibitem{10.2307/2296779}
J.~A. Mirrlees, ``{An Exploration in the Theory of Optimum Income Taxation},''
  \emph{The Review of Economic Studies}, vol.~38, no.~2, pp. 175--208, 1971.

\bibitem{DBLP:journals/corr/CavalloPS12}
R.~Cavallo, D.~C. Parkes, and S.~P. Singh, ``{Optimal Coordinated Planning
  Amongst Self-Interested Agents with Private State},'' \emph{CoRR}, vol.
  abs/1206.6820, 2012.

\bibitem{Bapna2005EfficientDA}
A.~Bapna and T.~A. Weber, ``{Efficient Dynamic Allocation with Uncertain
  Valuations?}'' 2005.

\bibitem{NIPS2003_2432}
D.~C. Parkes and S.~P. Singh, ``{An MDP-Based Approach to Online Mechanism
  Design},'' in \emph{Advances in Neural Information Processing Systems 16},
  S.~Thrun, L.~K. Saul, and B.~Sch\"{o}lkopf, Eds.\hskip 1em plus 0.5em minus
  0.4em\relax MIT Press, 2004, pp. 791--798.

\bibitem{Friedman:2003:PWS:779928.779978}
E.~J. Friedman and D.~C. Parkes, ``{Pricing WiFi at Starbucks: Issues in Online
  Mechanism Design},'' in \emph{Proceedings of the 4th ACM Conference on
  Electronic Commerce}, ser. EC '03.\hskip 1em plus 0.5em minus 0.4em\relax New
  York, NY, USA: ACM, 2003, pp. 240--241.

\bibitem{BESANKO198533}
D.~Besanko, ``{Multi-period Contracts Between Principal and Agent with Adverse
  Selection},'' \emph{Economics Letters}, vol.~17, no.~1, pp. 33 -- 37, 1985.

\bibitem{RePEc:pri:metric:wp046_2012_battaglini_lamba_optm_dyn_contract_10october2012_short.pdf}
M.~Battaglini and R.~Lamba, ``{Optimal Dynamic Contracting},'' Princeton
  University, Department of Economics, Econometric Research Program., Working
  Papers 1431, Oct. 2012.

\bibitem{10.1007/978-3-319-67540-4_8}
F.~Farhadi, H.~Tavafoghi, D.~Teneketzis, and J.~Golestani, ``A dynamic
  incentive mechanism for security in networks of interdependent agents,'' in
  \emph{Game Theory for Networks}, L.~Duan, A.~Sanjab, H.~Li, X.~Chen,
  D.~Materassi, and R.~Elazouzi, Eds.\hskip 1em plus 0.5em minus 0.4em\relax
  Cham: Springer International Publishing, 2017, pp. 86--96.

\bibitem{BERGEMANN2015679}
D.~Bergemann and A.~Pavan, ``{Introduction to Symposium on Dynamic Contracts
  and Mechanism Design},'' \emph{Journal of Economic Theory}, vol. 159, pp. 679
  -- 701, 2015, symposium Issue on Dynamic Contracts and Mechanism Design.

\bibitem{RePEc:cwl:cwldpp:2102}
D.~Bergemann and J.~Valimaki, ``{Dynamic Mechanism Design: An Introduction},''
  Cowles Foundation for Research in Economics, Yale University, Cowles
  Foundation Discussion Papers 2102, Aug. 2017.

\bibitem{10.2307/1914085}
T.~Groves, ``{Incentives in Teams},'' \emph{Econometrica}, vol.~41, no.~4, pp.
  617--631, 1973.

\end{thebibliography}
\end{document}